\documentclass[twocolumn,floatfix,aps,superscriptaddress,showpacs,amsfonts]{revtex4}
\usepackage{graphicx}
\usepackage{array}
\usepackage{amsthm}
\usepackage{amsmath}
\usepackage{amssymb}

\theoremstyle{plain}
\newtheorem*{thm*}{Theorem}

\begin{document}

\title{Optimal measurements to access classical correlations of
  two-qubit states}

\author{Xiao-Ming Lu} \affiliation{Zhejiang Institute of Modern
  Physics, Department of Physics, Zhejiang University, Hangzhou
  310027, China}

\author{Jian Ma} \affiliation{Zhejiang Institute of Modern Physics,
  Department of Physics, Zhejiang University, Hangzhou 310027, China}

\author{Zhengjun Xi} \affiliation{Zhejiang Institute of Modern
  Physics, Department of Physics, Zhejiang University, Hangzhou
  310027, China} \affiliation{College of Computer Science, Shaanxi
  Normal University, Xi'an 710062, China}

\author{Xiaoguang Wang} \affiliation{Zhejiang Institute of Modern
  Physics, Department of Physics, Zhejiang University, Hangzhou
  310027, China}
\begin{abstract}
  We analyze the optimal measurements accessing classical correlations
  in arbitrary two-qubit states. Two-qubit states can be transformed
  into the canonical forms via local unitary operations. For the
  canonical forms, we investigate the probability distribution of the
  optimal measurements.  The probability distribution of the optimal
  measurement is found to be centralized in the vicinity of a specific
  von Neumann measurement, which we call the
  maximal-correlation-direction measurement (MCDM). We prove that for
  the states with zero-discord and maximally mixed marginals, the MCDM
  is the very optimal measurement.  Furthermore, we give an upper
  bound of quantum discord based on the MCDM, and investigate its
  performance for approximating the quantum discord.
\end{abstract}
\pacs{03.67.-a,03.65.Ta}
\maketitle

\section{introduction}

Optimization procedures are involved in many quantities in the quantum
information theory.  One of the most important examples is the
entanglement of formation, which is defined as the least expected
entanglement over all ensembles of pure states realizing the given
states~\cite{Bennett1996}.  Here we focus on the optimization
procedure involved in the classical correlation, which is defined as
the maximal amount of information about the subsystem $B$ that can be
obtained via performing measurement on the other subsystem
$A$~\cite{Henderson2001}. Meanwhile, the non-classical correlation is
measured by the so-called quantum discord~\cite{Ollivier2001}, which
is the difference of the total amount of correlations and the
classical correlation. Over the past decade, quantum discord has
received a lot of attention, including analytic calculations for some
special two-qubit sates~\cite{Luo2008,Ali2010}, detecting of quantum
discord~\cite{Bylicka2010,Dakic2010,Datta2009a,Datta2010,Ferraro2010,Luo2009,Maziero2010,Modi2010},
quantum discord in the concrete physical
models~\cite{Bradler2009,Chen2010a,Cui2009,Datta2009,Fanchini2010a,Maziero2010a,Sarandy2009,Soares-Pinto2010,Wang2010a,Werlang2010},
the dynamics of quantum
discord~\cite{Chakrabarty2010,Cole2008,Fanchini2010,Maziero2009,Mazzola2010,Mazzola2010a,Lu2010,Maziero2010b,Vasile2010,Wang2010,Werlang2009,Yuan2010},
quantum discord in continuous variable
system~\cite{Adesso2010,Giorda2010}, quantum discord of multipartite
states~\cite{Chen2010}, and exploration in the
laboratory~\cite{Lanyon2008,Xu2010a,Xu2010}, etc.  Most recently,
operational interpretations of quantum discord are proposed in
Ref.~\cite{Cavalcanti2010,Madhok2010}, where quantum discord was shown
to be a quantitative measure about the performance in the quantum
state merging.

To access the total classical correlation or get the value of quantum
discord, one has to find the corresponding optimal measurements. On
the other hand, the sudden change of the optimal measurements during
evolution is related to a novel phenomenon of the quantum
discord---the sudden change in decay
rates~\cite{Maziero2009,Xu2010a,Mazzola2010,Mazzola2010a,Lu2010}. So a
study on the optimal measurements will help us to understand the
dynamical properties of the quantum discord. The optimization involved
is taken over general measurements, which is described by a
positive-operator-valued measure (POVM). In Ref.~\cite{Hamieh2004},
Hamieh \emph{et al.} showed that for two-qubit system the optimal
measurement over POVM is a projective measurement. In this paper, we
only consider orthogonal projective measurements , knows as von
Neumann measurement. Despite recent progress, the optimization
procedure over von Neumann measurements is still hard to be resolved
for general two-qubit states, and analytic approaches still lack. This
motives us to systematically investigate the optimal measurements and
find an effective way to approach the optimal measurement.

In the present article, we investigate the probability distribution of
the optimal measurements. Based on a general analysis on those factors
which influence the classical correlation, we introduce the canonical
forms for the two-qubit states and the maximal-correlation-direction
measurement (MCDM). For arbitrary two-qubit states, the optimal
measurements are found to be centralized in the vicinity of the MCDM.
We prove that for the states with zero-discord and maximally mixed
marginal, the MCDM is the very optimal measurement accessing classical
correlation.  We also study its veracity for $X$-states and arbitrary
states. Then we propose the MCDM-based discord as an upper bound of
quantum discord. It is demonstrated that the MCDM-based could be a
good approximation of quantum discord.

This article is organized as follows. In Sec.~\ref{sec:review-of-QD},
we give a brief review on the measures of the total, quantum and
classical correlations. In Sec.~\ref{sec:MCDM}, first we give a
general analysis on those factors which influence the classical
correlations.  Then we introduced the concepts of the canonical form
and the MCDM. In Sec.~\ref{sec:performance-of-MCDM}, we investigate
the probability distribution of the optimal measurements and the
performance of the MCDM. In Sec.~\ref{sec:MCDM-based-discord}, we
propose the MCDM-based discord as an upper bound of quantum discord,
and investigate its performance for approximating quantum
discord. Section \ref{sec:conclusion} is the conclusion.

\section{review of quantum discord\label{sec:review-of-QD}}

First, we recall the concepts of the total amount of correlations, the
classical correlation and the quantum correlation. Given a quantum
state $\rho$ in a composite Hilbert space
$\mathcal{H}=\mathcal{H}^{A}\otimes\mathcal{H}^{B}$, the total amount
of correlation is quantified by the quantum mutual
information~\cite{Groisman2005}
\begin{equation}
  \mathcal{I}(\rho)=S(\rho^{A})+S(\rho^{B})-S(\rho),
  \label{eq:mutual_information}
\end{equation}
where $S(\rho)\equiv-\mathrm{Tr}\left[\rho\log_{2}\rho\right]$ is the
von Neumann entropy and $\rho^{A(B)}=\mathrm{Tr}_{B(A)}[\rho]$ is the
reduced density matrix by tracing out system $B(A)$. The total amount
of correlations can be splitted into the quantum and the classical
parts~\cite{Ollivier2001,Henderson2001}. The classical correlation is
seen as the amount of information about the subsystem $B$ that can be
obtained via performing a measurement on the other subsystem $A$
. Then the measure of the classical correlation is defined by
\begin{equation}
  \mathcal{C}(\rho)=S(\rho^{B})-\min_{\{E_{k}^{A}\}}\sum_{k}p_{k}S(\rho_{k}^{B}),
  \label{eq:classical_correlation}
\end{equation}
where $\{E_{k}^{A}\}$ is the POVM performed on $A$ and
$\rho_{k}^{B}:=\mathrm{Tr}_{A}\left(E_{k}^{A}\otimes\openone^{B}\rho\right)/p_{k}$
is the remaining state of $B$ after obtaining the outcome $k$ on $A$
with the probability
$p_{k}:=\mathrm{Tr}$$\left(E_{k}^{A}\otimes\openone^{B}\rho\right)$.
$\openone^{A(B)}$ is the identity operator on the subsystem $A(B)$.
Meanwhile, the quantum correlation is measured via the quantum discord
defined by~\cite{Ollivier2001,Henderson2001}
\begin{equation}
  \mathcal{D}(\rho)=\mathcal{I}(\rho)-\mathcal{C}(\rho),
  \label{eq:quantum_discord}
\end{equation}
which is the difference of the total amount of correlation
$\mathcal{I}(\rho)$ and the classical correlation $\mathcal{C}(\rho)$.

The definition of the classical and quantum correlations involves an
optimization process to minimize the term
$\sum_{k}p_{k}S(\rho_{k}^{B})$, which is considered as the quantum
version of the conditional entropy~\cite{Ollivier2001}. For two-qubit
systems, it is shown that the optimal POVM is a projective
measurement~\cite{Hamieh2004}.  Hereafter, we will only consider the
orthogonal measurement projective measurement, knows as von Neumann
measurement.  A von Neumann measurement $\{\Pi_1^A,\Pi_2^A\}$ of a
two-qubit system can be characterized by a unit vector
$n=(n_{1},n_{2},n_{3})^{T}$ on the Bloch sphere, through
\begin{eqnarray}
  \Pi_{1}^{A} & = & \frac{1}{2}\left(\openone^{A}+\sum_{i=1}^{3}n_{i}\sigma_{i}^{A}\right),\nonumber \\
  \Pi_{2}^{A} & = & \frac{1}{2}\left(\openone^{A}-\sum_{i=1}^{3}n_{i}\sigma_{i}^{A}\right).
  \label{eq:qubit_measurement}
\end{eqnarray}
So the optimization may be taken over half of the Bloch sphere, since
exchanging $\Pi_{1}^{A}$ and $\Pi_{2}^{A}$ gives the same measurement.

\section{\label{sec:MCDM}general analysis and
  maximal-correlation-direction measurement}

The classical correlation (\ref{eq:classical_correlation}) can be
expressed as
$\mathcal{C}=\max_{\{\Pi_{k}^{A}\}}\sum_{k}p_{k}S(\rho_{k}^{B}||\rho^{B})$,
where
$S(\rho_{k}^{B}||\rho^{B}):=-S(\rho_{k}^{B})-\mathrm{Tr}(\rho_{k}^{B}\log\rho^{B})$
is the relative entropy \cite{Henderson2001}. If we consider the
relative entropy as a measure of the distance, then the classical
correlation can be considered as maximal average distance between the
remaining state $\rho_{k}^{B}$ and the reduced state $\rho^{B}$.  It
seems that the optimal measurement tends to be the one which makes the
remaining states far away from the reduced state, although the
probability $p_{k}$ also plays an important role in the actual
optimization problem.

Based on this tendency, we use the following Fano-Bloch decomposition
of an arbitrary two-qubit state\cite{Bloch1946,Fano1983}:

\begin{equation}
  \rho=\rho^{A}\otimes\rho^{B}+\frac{1}{4}\sum_{ij=1}^{3}\Lambda_{ij}\sigma_{i}^{A}\otimes\sigma_{j}^{B},
  \label{eq:correlation_matrix}
\end{equation}
where
$\Lambda_{ij}=\langle\sigma_{i}^{A}\sigma_{j}^{B}\rangle_{\rho}-\langle\sigma_{i}^{A}\rangle_{\rho}\langle\sigma_{j}^{B}\rangle_{\rho}$
is the correlation function with $\langle
O\rangle_{\rho}:=\mathrm{Tr}\left[\rho O\right]$ defined. This form
was used to investigate the dynamics of open quantum systems in the
presence of initial correlation~\cite{vStelmachovivc2001}, and the
correlation functions were used to characterized the correlations in a
quantum state of a composite system~\cite{Huang2008,Dong2010}.
Because the quantum and classical correlations are both invariant
under local unitary transformations, we will consider a special set of
two-qubit states, in which for arbitrary two-qubit state we can find
an equivalent state up to local unitary transformations. In
Ref.~\cite{Luo2008}, Luo showed that the matrix $\Lambda$ in
Eq.~(\ref{eq:correlation_matrix}) can be diagonalized through local
unitary transformations. Furthermore, the order and the signs of the
eigenvalues of $\Lambda$ can be realigned through $SO(3)$
transformations, which correspond to $SU(2)$ transformations on the
density matrices. Here we introduce the canonical form of two-qubit
states, which is defined by

\begin{equation}
  \rho=\rho^{A}\otimes\rho^{B}+\frac{1}{4}\sum_{i=1}^{3}\Lambda_{i}\sigma_{i}^{A}\otimes\sigma_{i}^{B}
  \label{eq:canonical_form}
\end{equation}
with $\Lambda_{i}=\langle\sigma_{i}^{A}\sigma_{i}^{B}\rangle_{\rho}-\langle\sigma_{i}^{A}\rangle_{\rho}\langle\sigma_{i}^{B}\rangle_{\rho}$
and $\Lambda_{1}\geq\Lambda_{2}\geq|\Lambda_{3}|$. The sign of
$\Lambda_3$ is determined by the determinant $|\Lambda|$ via $|\Lambda
|=\Lambda_1\Lambda_2\Lambda_3$.  An arbitrary two-qubit state is
equivalent to a certain state in the canonical form up to local
unitary transformations. So studying the states in the canonical form
is adequate to understand the quantum and classical correlations of
arbitrary two-qubit states. Hence, we will only consider the states in
the canonical form hereafter.

After von Neumann measurement characterized by
Eq.~(\ref{eq:qubit_measurement}) performed on $A$, we get the
remaining state of $B$ with the outcome $k=1,2$ on $A$ as follows
\begin{equation}
  \rho_{k}^{B}=\rho^{B}+\frac{1}{p_{k}}\Delta_{k}
  \label{eq:decomposition_remaining_state}
\end{equation}
where $p_{k}=\mathrm{Tr}\left[\rho^{A}\Pi_{k}^{A}\right]$ and
$\Delta_{k}\equiv\frac{1}{4}(-1)^{k+1}\sum_{i=1}^{3}n_{i}\Lambda_{i}\sigma_{i}^{B}$
are both dependent on $\Pi_{k}^{A}(n_{i})$. Note that
$\Delta_{2}=-\Delta_{1}$.  From
Eq.~(\ref{eq:decomposition_remaining_state}), we can see that the von
Neumann measurement influences the quantum conditional entropy through
$p_{k}$ and $\Delta_{k}$. The influence of $p_{k}$ is complicated,
since $p_{k}$ not only impacts the remaining state $\rho_{k}^{B}$ but
also impacts the averaging of $S(\rho_{k}^{B})$, while $\Delta_{k}$
only impacts the remaining state. If we assume that the influence of
$\Delta_{k}$ to the quantum conditional entropy is stronger than that
of $p_{k}$, the optimal measurement will tend to be the one which
maximizes
$|\Delta_{1}|\equiv\sqrt{\mathrm{Tr}\Delta_{1}\Delta_{1}^{\dagger}}$
(Remind $|\Delta_{2}|=|\Delta_{1}|$). After some algebras, we have
\begin{equation}
  |\Delta_{1}|^{2}=\frac{1}{16}\sum_{i=1}^{3}\Lambda_{i}^{2}n_{i}^{2}\leq\frac{1}{16}\Lambda_{1}^{2}.
  \label{eq:MCDM}
\end{equation}
The maximum of $|\Delta_{1}|^{2}$ is achieved when $n=(1,0,0)^{T}$.
So the corresponding measurement is
\begin{equation}
  \left\{\Pi_{1}^{A}=\frac{1}{2}(\openone+\sigma_{1}^{A}),\,\Pi_{2}^{A}=\frac{1}{2}(\openone-\sigma_{1}^{A})\right\}.
  \label{eq:MCDM_canonical}
\end{equation}
Because $\Lambda_{i}$ are correlation functions and $\Lambda_{1}$ is
the largest one of them, we call Eq.~(\ref{eq:MCDM_canonical}) the
maximal-correlation-direction measurement (MCDM) performed on $A$ for
the two-qubit states in the canonical form. For an arbitrary two-qubit
state, we can always find a local unitary transformation $U_{1}\otimes
U_{2}$ such that $\tilde{\rho}=U_{1}\otimes U_{2}\rho
U_{1}^{\dagger}\otimes U_{2}^{\dagger}$ is in the canonical form. So
the MCDM performed on $A$ for an arbitrary two-qubit state is given by
\begin{equation}
  \left\{ \frac{1}{2}\left(\openone+U_{1}^{\dagger}\sigma_{1}^{A}U_{1}\right),\frac{1}{2}\left(\openone-U_{1}^{\dagger}\sigma_{1}^{A}U_{1}\right)\right\} .
  \label{eq:MCDM_arbitrary}
\end{equation}

\section{\label{sec:performance-of-MCDM}performance of the MCDM}
In the following, we will investigate the performance of the MCDM to
access the classical correlation measured by the quantum discord.

\subsection{Zero-discord states}
The zero-discord states are the states satisfying
$\rho=\sum_{k}\Pi_{k}^{A}\otimes\openone^{B}\rho\Pi_{k}^{A}\otimes\openone^{B}$,
where $\{\Pi_{k}^{A}\}$ is just the optimal von Neumann measurement to access the
classical correlation, see Ref.~\cite{Ollivier2001}. In Appendix
\ref{sec:appd_zero-discord}, we show that a sufficient and necessary
condition of zero-discord is the existence of such a unit vector $n$
satisfying the following equations
\begin{eqnarray}
  nn^{T}a & = & a,\label{eq:zero-QD-1}\\
  nn^{T}R & = & R,\label{eq:zero-QD-2}
\end{eqnarray}
where $a_{i}=\mathrm{Tr}(\sigma_{i}^{A}\rho^{A})$ is the polarization
vector of the reduced density matrix $\rho^{A}$ , $R$ is a $3\times3$
matrix with the elements
$R_{ij}=\mathrm{Tr}(\sigma_{i}^{A}\otimes\sigma_{j}^{B}\rho)$, and $n$
is the column vector characterizing the optimal von Neumann
measurement via Eq.~(\ref{eq:qubit_measurement}). For the states in
the canonical form, we have $R=\Lambda+ab^{T}$, see the definitions of
$R$, $\Lambda$, $a$ and $b$. Then the conditions (\ref{eq:zero-QD-1})
and (\ref{eq:zero-QD-2}) leads to
\begin{equation}
  \Lambda=nn^{T}\Lambda.
  \label{eq:zero-discord-condition}
\end{equation}
Considering
$\Lambda=\mathrm{diag\{\Lambda_{1},\Lambda_{2},\Lambda_{3}\}}$ is a
diagonal matrix with $\Lambda_{1}\geq\Lambda_{2}\geq|\Lambda_{3}|$,
from Eq.~(\ref{eq:zero-discord-condition}), the first diagonal element
of the matrix $\Lambda$ reads
\begin{equation}
  \Lambda_{1}=\left(n_{1}\right)^{2}\Lambda_{1}.
\end{equation}
So we obtain $n_{1}=\pm1$, which corresponds to the MCDM.

So we conclude that for the zero-discord states, the MCDM is just the
optimal measurement to access the classical correlation.

\subsection{states with maximally mixed marginals\label{MMM_states}}
The states with maximally mixed marginals are the ones satisfying
$\rho^{A}=\openone^{A}/d_{A}$ and $\rho^{B}=\openone^{B}/d_{B}$, where
$d_{A(B)}=\mathrm{dim}(\mathcal{H}^{A(B)})$ is the dimension of the
Hilbert-space $\mathcal{H}^{A(B)}$. For two-qubit systems, The
canonical forms of states with maximally mixed marginals must be
Bell-diagonal states, while an arbitrary Bell-diagonal state need not
be in the canonical form. An arbitrary Bell-diagonal state reads
\begin{equation}
  \rho=\frac{1}{4}\openone^A\otimes\openone^B+\frac{1}{4}\sum_{i=1}^3
  c_i\sigma_i^A \otimes\sigma^B_i
\end{equation}
with $c_i=\langle\sigma_i^A\otimes\sigma_i^B\rangle$. Only when
$c_i\geq c_2\geq |c_3|$, the Bell-diagonal state is in the canonical
form. For Bell-diagonal states, Luo got the analytical results of
quantum discord~\cite{Luo2008}. There it was shown that the optimal
measurement is given by the unit vector $n=(n_1,n_2,n_3)^T$ which
maximizes $\sqrt{\sum_{i=1}^{3}c_{i}^{2}n_{i}^{2}}$. So the optimal
measurements of Bell-diagonal states must be in the \emph{universal}
finite set $\{\{\frac{1}{2}(\openone\pm \sigma_i^A)
\}|i=1,2,3\}$. Here \emph{universal} means this set of von Neumann
measurements is independent on the given states.  For the
Bell-diagonal states in the canonical form, we have $c_{1}\geq c_{2}
\geq|c_{3}|$.  Then the optimal measurement is explicitly given by
$n=(1,0,0)^{T}$, which is consistent with the MCDM
(\ref{eq:MCDM_canonical}). This is because $\Lambda_{i}=c_{i}$ due to
$\langle\sigma_{i}^{A}\rangle_{\rho}=0$ (maximal mixed margins), then
the maximization of $\sqrt{\sum_{i=1}^{3}c_{i}^{2}n_{i}^{2}}$ is
equivalent to the maximization of $|\Delta_{1}|^{2}$, see
Eq.~(\ref{eq:MCDM}).

So we conclude that for the states with maximally mixed marginals, the
MCDM is just the optimal measurement to access the classical
correlation.

\subsection{$X$-states}

In the following, we consider the so-called $X$-states, named because
of the visual appearance of the density matrix
\begin{equation}
  \rho=\left[\begin{array}{cccc}
      a & 0 & 0 & w^{*}\\
      0 & b & z^{*} & 0\\
      0 & z & c & 0\\
      w & 0 & 0 & d
    \end{array}\right].
\end{equation}
The $X$-states, including maximally entangled Bell states and Werner
states, are a class of typical quantum states in the field of quantum
information. An algebraic characterization of $X$-states is presented
in Ref.~\cite{Rau2009}. The Fano-Bloch representation matrix
$\tau_{ij}=\mathrm{Tr}(\sigma_{i}^{A}\otimes\sigma_{j}^{B}\rho)$ of
$X$-states is also of $X$-type
\begin{equation}
  \tau=\left[\begin{array}{cccc}
      1 & 0 & 0 & b_{3}\\
      0 & R_{11} & R_{12} & 0\\
      0 & R_{21} & R_{22} & 0\\
      a_{3} & 0 & 0 & R_{33}\end{array}\right].
\end{equation}
An important property of the class of the $X$-states is that an
$X$-state after the local unitary operations
$\exp(i\sigma^A_3\varphi_1/2)\otimes\exp(i\sigma^B_3\varphi_2/2)$ is
also an $X$-state. This leads to the following theorem:
\begin{thm*}
  For the $X$-states, it is impossible to exist a \textbf{universal
    finite} set of von Neumann measurements among which the optimal
  measurement must be.
\end{thm*}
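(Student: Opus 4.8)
The plan is to argue by contradiction. Suppose there were a universal finite set $\mathcal{S}=\{n^{(1)},\dots,n^{(m)}\}$ of von Neumann measurements with the property that for every $X$-state at least one optimal measurement accessing $\mathcal{C}$ lies in $\mathcal{S}$. I will construct a continuous one-parameter family of $X$-states whose optimal measurements are unique and sweep out an infinite set of distinct Bloch directions, which therefore cannot all be absorbed into the finite set $\mathcal{S}$.

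First I would fix a seed state. Take a Bell-diagonal state $\rho_{0}=\tfrac14\openone^{A}\otimes\openone^{B}+\tfrac14\sum_{i=1}^{3}c_{i}\,\sigma_{i}^{A}\otimes\sigma_{i}^{B}$ with the strict ordering $c_{1}>c_{2}>|c_{3}|$. Written in the computational basis, $\rho_0$ has the $X$-form, so it is an $X$-state (indeed it is in the canonical form). By Luo's analytical result quoted above, the optimal measurement of $\rho_0$ is the unit vector $n$ maximizing $\sqrt{\sum_{i=1}^{3}c_{i}^{2}n_{i}^{2}}$; because $c_{1}^{2}$ is \emph{strictly} larger than $c_{2}^{2}$ and $c_{3}^{2}$, the maximizer is unique, namely $n_{0}=(1,0,0)^{T}$, which is not along the third axis.

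Next I would invoke the invariance property recorded just before the theorem: applying the local unitary $U_{1}(\varphi)\otimes\openone^{B}$ with $U_{1}(\varphi)=\exp(i\sigma_{3}^{A}\varphi/2)$ (the $\varphi_{2}=0$ case of the stated transformation) maps $X$-states to $X$-states. Set $\rho_{\varphi}=(U_{1}(\varphi)\otimes\openone^{B})\,\rho_{0}\,(U_{1}(\varphi)^{\dagger}\otimes\openone^{B})$, so every $\rho_{\varphi}$ is an $X$-state. Since $\mathcal{C}$ is invariant under local unitaries and the POVM on $A$ transforms covariantly — a measurement $\{\Pi_{k}^{A}\}$ on $\rho_{0}$ becoming $\{U_{1}(\varphi)\Pi_{k}^{A}U_{1}(\varphi)^{\dagger}\}$ on $\rho_{\varphi}$, which in the parametrization of Eq.~(\ref{eq:qubit_measurement}) corresponds to rotating $n$ about the third axis by the $SO(3)$ matrix $R(\varphi)$ associated with $U_{1}(\varphi)$ — the optimal measurement of $\rho_{\varphi}$ is again unique and is characterized by $n_{\varphi}=R(\varphi)n_{0}=(\cos\varphi,\sin\varphi,0)^{T}$ (up to the overall sign convention). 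As $\varphi$ runs over $[0,\pi)$ the vectors $n_{\varphi}$ are pairwise distinct and, since the identification $n\sim-n$ is already accounted for by restricting to $[0,\pi)$, they define pairwise distinct von Neumann measurements. Thus $\{\rho_{\varphi}\}_{\varphi\in[0,\pi)}$ is a family of $X$-states whose unique optimal measurements form an infinite set, contradicting membership in the finite set $\mathcal{S}$; this proves the theorem.

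I expect the only genuinely delicate point to be establishing that the seed state admits a \emph{unique} optimal measurement that is off the third axis — everything downstream is then forced. This is precisely where the explicit Bell-diagonal formula of Ref.~\cite{Luo2008} together with the strict inequalities $c_{1}>c_{2}>|c_{3}|$ do the work (uniqueness because $S(\rho_{k}^{B})$ is a strictly decreasing function of $|\Delta_{1}|$ for Bell-diagonal states, so minimizing the conditional entropy forces $n=(1,0,0)^{T}$). The remaining ingredients — the covariance of the optimal measurement on $A$ under local unitaries, and the closure of the $X$-class under $\exp(i\sigma_{3}^{A}\varphi_{1}/2)\otimes\exp(i\sigma_{3}^{B}\varphi_{2}/2)$ — are immediate from the definitions in Sec.~\ref{sec:review-of-QD} and from the property already stated in the text.
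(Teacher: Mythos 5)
Your proof is correct and follows essentially the same route as the paper: both arguments combine the closure of the $X$-class under $\exp(i\sigma_3^A\varphi_1/2)\otimes\exp(i\sigma_3^B\varphi_2/2)$, the covariance of the optimal measurement on $A$ under local unitaries, and a Bell-diagonal state whose optimal measurement lies off the third axis, to show that a finite universal set would have to contain a continuous (hence infinite) orbit of distinct measurements. If anything, your write-up is slightly tighter than the paper's, since you explicitly secure \emph{uniqueness} of the seed state's optimal measurement via the strict ordering $c_1>c_2>|c_3|$ --- a point the paper's first step tacitly assumes when it concludes that the rotated measurement $\tilde{\mathcal{M}}_i(\varphi_1)$ must itself belong to the universal set.
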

\begin{proof}
  The proof is divided into two steps. First, we assume that for the
  class of $X$-state, such a universal finite set exists and is
  denoted by $\mathcal{M}:=\{\mathcal M_i|i\in \mathcal{I}\}$, where
  $\mathcal{M}_i$ is a von Neumann measurement and $\mathcal{I}$ is a
  finite set of the indexes.  Let $R_z(\varphi_1,\varphi_2):=
  R_z^A(\varphi_1)\otimes R_z^B(\varphi_2)$ with
  $R_z^{A(B)}(\varphi):=\exp(i\sigma^{A(B)}_3\varphi/2)$. If
  $\mathcal{M}_i$ is the optimal measurement for a given state $\rho$,
  then
  $\tilde{\mathcal{M}}_i(\varphi_1)=R_z(\varphi_1)\mathcal{M}_iR^\dagger_z(\varphi_1)$
  is the optimal measurement for the state
  $\tilde{\rho}(\varphi_1,\varphi_2)=R_z(\varphi_1,\varphi_2)\rho
  R_z^\dagger(\varphi_1,\varphi_2)$. Because
  $\tilde{\rho}(\varphi_1,\varphi_2)$ is also an $X$-state, we must
  have $\tilde{\mathcal{M}}_i(\varphi_1)\in\mathcal{M}$. Remind that
  $\mathcal{M}$ is a finite set, the only possibility is that all the
  $\mathcal{M}_i$ are invariant under the operation $R_z(\varphi_1)$
  with arbitrary $\varphi_1$, i.e., the only possibility of
  $\mathcal{M}_i$ is $\{(\openone\pm\sigma_3^A)/2\}$. In the second
  step, we disprove this only possibility. From Sec.~\ref{MMM_states},
  we already know that for Bell-diagonal states, there are three
  possibility of the optimal measurement: $\{(\openone\pm\sigma_i^A)/2\}$
  for $i=1,2,3$. These three measurements constitute the minimal
  universal finite set $\mathcal{M}_\mathrm{BD}$ of possible optimal
  measurement for Bell-diagonal states. Because Bell-diagonal states
  are a subclass of the $X$-states, $\mathcal{M}_\mathrm{BD}$ must be
  a subset of $\mathcal{M}$, but actually it is not. This disproves
  the only possibility derived in the first step. So the above theorem
  is proved.
\end{proof}
The above theorem implies for the entire class of $X$-states, the
optimization procedure involved in the classical correlation must be
state-dependent. This result is opposite to that of
Ref.~\cite{Ali2010}, where the authors gave a universal finite set of
candidates and sought the optimal measurement in this set. In
Ref.~\footnote{ This conclusion is opposite with the results of
  Ref.~\cite{Ali2010}. In Ref.~\cite{Ali2010}, the optimization is
  taken with respect to the four parameters $m$, $n$, $k$ and $l$ with
  one constraint $k+l=1$, see Eq.~(18) in
  Ref.~\cite{Ali2010}. However, only two independent parameters is
  needed to characterize a von Neumann measurement on two-qubit
  systems, so there should be one more constraint. The four parameters
  $m$, $n$, $k$, $l$ are related to three other parameter $z_1$, $z_2$
  and $z_3$ in Ref.~\cite{Luo2008} through $4m=z_2^2$, $4n=-z_1 z_2$,
  $k-l=z_3$. The three parameters $z_1$, $z_2$, $z_3$ satisfy
  $z_1^2+z_2^2+z_3^2=1$~\cite{Luo2008}, which implicitly gives another
  constraint on the four parameters $m$, $n$, $k$ and $l$. This
  implicit constraint was not taken into consideration in the
  optimization procedure in Ref.~\cite{Ali2010}. So the analytical
  result of quantum discord for $X$-states has not been obtained up to
  now.  }, we show the constraint missed in Ref.~\cite{Ali2010}. To
elucidate this, we also give an explicit example:
\begin{equation}
  \rho=\left[
    \begin{array}{cccc}
      0.0783 & 0 & 0 & 0\\
      0 & 0.1250 & 0.1000 & 0\\
      0 & 0.1000 & 0.1250 & 0\\
      0 & 0 & 0 & 0.6717
    \end{array}
  \right].
  \label{eq:anti_example}
\end{equation}
This state is in the canonical form with $\Lambda_{1}=\Lambda_{2}=0.2$
and $\Lambda_{3}=0.1479$. The optimal measurements are characterized
by two angle $\theta$ and $\phi$, via Eq.~(\ref{eq:qubit_measurement})
and
\begin{equation}
  n=(\sin\theta\cos\phi,\sin\theta\sin\phi,\cos\theta)^T
\end{equation}
with $\theta\in [0,\pi)$ and $\phi\in[-\pi/2,-\pi/2).$ It can be
directly verified that the state (\ref{eq:anti_example}) is invariant
under the operation
$\exp(i\varphi_1\sigma_3^A)\otimes\exp(i\varphi_2\sigma_3^B)$, so the
optimization is only relevant to $\theta$, see
Ref.~\cite{Lu2010}. Following the optimization strategy of
Ref.~\cite{Ali2010}, the value of $\theta$ can only be either $0$ or
$\pi/2$. However, the optimal measurement of this state is numerically
found at $\theta\simeq0.155\pi$.

In the following, we numerically investigate the probability
distribution of the optimal measurement for the class of $X$-states.
We first generate 100,000 random density matrices according to
Hilbert-Schmidt measure~\cite{Zyczkowski2001} and then project them
into the X-state subspace via $\rho_X = \sum_{k=1,2}E_k\rho
E_k^\dagger$ with $E_1=\mathrm{diag\{1,0,0,1\}}$ and
$E_2=\mathrm{diag\{0,1,1,0\}}$. These random X-states was later
transformed into the canonical forms via local unitary
transformations. We numerically find optimal measurements to minimize
the quantum conditional entropy, utilizing the general expression of
the quantum conditional entropy, see Eq.~(\ref{eq:general_QCE}) in
Appendix \ref{sec:appd-general-expression-QCE}.  If there are more
than one optimal measurement, we chose the one closest to the MCDM,
because we concern on how to find an optimal measurement but not all
of the optimal measurements. Our numerical results in
Table~\ref{tab:X_state} show the probability that the MCDM would be
the optimal measurement is about $99.40\%$, and the second preference
of the optimal measurement is given by $n=(0,1,0)^{T}$.
\begin{table}
  \begin{centering}
    \begin{tabular}{|>{\centering}p{80pt}|>{\centering}p{100pt}|c|}
      \hline
      $\theta$ & $\phi$ & Percentage\tabularnewline
      \hline
      \hline
      $\pi/2$ & 0 & 99.40\%\tabularnewline
      \hline
      $\pi/2$ & $-\pi/2$ & 0.60\%\tabularnewline
      \hline
    \end{tabular}
    \par\end{centering}
  \caption{Distribution of the optimal measurements to access the classical correlation,
    for random density matrices in the canonical form and equivalent to
    $X$-states up to local unitary transformations. The total number
    of the random states is 10000.\label{tab:X_state}}
\end{table}

In Table~\ref{tab:X_state}, it seems that the value of $\theta$ for
the optimal measurement is always $\pi/2$, however, a counterexample
does exist and is already given in the Eq.~(\ref{eq:anti_example}).

\subsection{arbitrary two-qubit states}

For arbitrary two-qubit states, the optimization procedure involved in
the quantum discord is unreachable up to now. The MCDM solves this
optimization for the states with zero-discord and with maximally mixed
marginals. Besides, the MCDM hits the optimal measurements for the
most of the $X$-states. So what about the performance of the MCDM for
arbitrary two-qubit states?

\begin{figure}[h]
  \begin{centering}
    \includegraphics[scale=0.5]{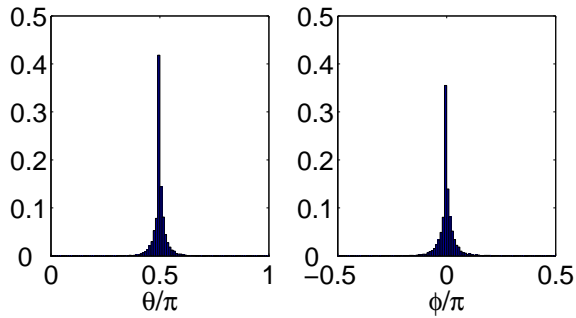}
    \par\end{centering}
  \caption{\label{fig:arbitrary_states}Probability distribution of the
    optimal measurements for the 100,000 random states according to Hilbert-Schmidt measure~\cite{Zyczkowski2001},in the canonical form. Here we discretized the domain of $\theta$ and
    $\phi$ into 100 sections.}
\end{figure}

We generate 100,000 random density matrices of two qubit according to
Hilbert-Schmidt measure~\cite{Zyczkowski2001} and transform them into
the canonical form. Then we numerically investigate the probability
distribution of the optimal measurement characterized by two angle
$\theta$ and $\phi$ via Eq.~(\ref{eq:qubit_measurement}).  In
Fig.~\ref{fig:arbitrary_states}, we show that the MCDM is indeed the
preference of the optimal measurements. Meanwhile, the optimal
measurements are centralized in the vicinity of the MCDM. This
motivates us to consider the MCDM as an alternative of the optimal
measurement accessing the classical correlation.

\section{\label{sec:MCDM-based-discord}MCDM-based discord}

The centralization of the optimal measurements in the vicinity of the
MCDM motivates us to introduce a MCDM-based discord for an arbitrary
two-qubit state as follows
\begin{equation}
  \tilde{\mathcal{D}}(\rho):=S(\rho^{A})-S(\rho)+S_{\{\tilde{\Pi}_{k}^{A}\}}(B|A),
\end{equation}
where $S_{\{\tilde{\Pi}_{k}^{A}\}}(B|A)$ is the quantum conditional
entropy $\sum_{k}p_{k}S(\rho_{k}^{B})$ based on the MCDM
$\{\tilde{\Pi}_{k}^{A}\}$ given by
Eq.~(\ref{eq:MCDM_arbitrary}). Because the MCDM is in the set of von
Neumann measurement over which the optimization involved in the
quantum discord is taken, $\tilde{\mathcal{D}}(\rho)$ will be not less
than the quantum discord $\mathcal{D}(\rho)$. In other words,
$\tilde{\mathcal{D}}(\rho)$ is an upper bound of the quantum discord
$\mathcal{D}(\rho)$. If $\tilde{\mathcal{D}}(\rho)$ is close enough to
$\mathcal{D}(\rho)$, the MCDM-based discord will be a good
approximation of the quantum discord. In the following, we investigate
how close is the MCDM-based discord to the quantum discord.

\begin{figure}
  \includegraphics[scale=0.5]{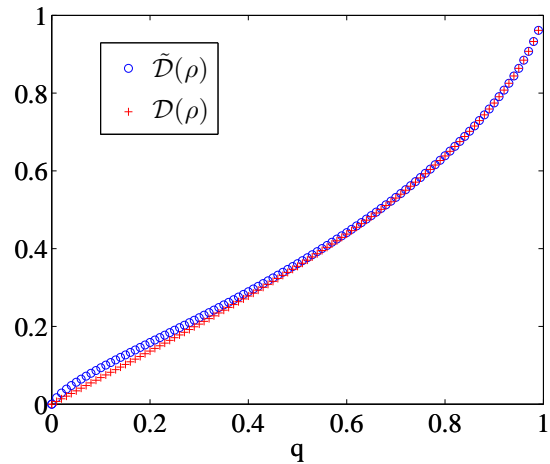}
  \caption{\label{fig:MCDM-based-discord-and-QD}MCDM-based discord and
    quantum discord for the states
    $\rho(q)=(1-q)|\psi_{0}\rangle\langle\psi_{0}|+q|\psi_{1}\rangle\langle\psi_{1}|$,
    where $|\psi_{0}\rangle$ and $|\psi_{1}\rangle$ are given by
    Eq.~(\ref{eq:mixtures}).}
\end{figure}

First, for simplicity, we consider a family of states
$\rho(q)=(1-q)|\psi_{0}\rangle\langle\psi_{0}|+q|\psi_{1}\rangle\langle\psi_{1}|$,
which are mixtures of a product state $|\psi_{0}\rangle$ and a maximal
entangled state $|\psi_{1}\rangle$, with the probability $1-q$ and $q$
respectively. Specifically, we choose $|\psi_{0}\rangle$ and
$|\psi_{1}\rangle$ as follows
\begin{eqnarray}
  |\psi_{0}\rangle & = & \frac{1}{\sqrt{2}}\left(|00\rangle+|10\rangle\right),\nonumber \\
  |\psi_{1}\rangle & = &
  \frac{1}{\sqrt{2}}\left(|01\rangle+|10\rangle\right).\label{eq:mixtures}
\end{eqnarray}
In Fig.~\ref{fig:MCDM-based-discord-and-QD}, we show that the
MCDM-based discord is very close to the quantum discord, which suggest
that it can be taken as a good approximation to the quantum discord.

\begin{figure}
  \includegraphics[scale=0.4]{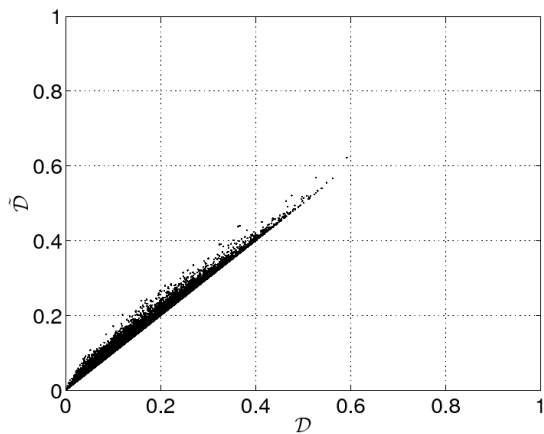}
  \caption{\label{fig:MCDM-based-discord-random}MCDM-based discord
    $\tilde{\mathcal{D}}$ versus $\mathcal{D}$, for 100,000 random
    density matrices according to the Hilbert-Schmidt
    measure~\cite{Zyczkowski2001}. The variance
    $\mathrm{avg}[(\tilde{\mathcal{D}}-\mathcal{D})^2]\simeq
    3.7433\times 10^{-5}$, where $\mathrm{avg}[\,\cdot\,]$ means the
    average taken over all the random density matrices.}

\end{figure}

For more convincing arguments, we investigate the MCDM-based discord
and the quantum discord for random states. It is shown that for the
most of the random states, the MCDM-based discord is very close to the
quantum discord, see Fig.~\ref{fig:MCDM-based-discord-random}.  This
demonstrate the efficientness of the MCDM-based discord.

\section{\label{sec:conclusion}conclusion}

In conclusion, we have investigated the probability distribution of
the optimal measurement accessing classical correlation and show that
the optimal measurements for arbitrary two-qubit state are centralized
in the vicinity of the MCDM. We have proved that the MCDM is the very
optimal measurement for the states with zero-discord and the ones with
maximally mixed marginals. Besides, we have proposed the MCDM-based
discord as an upper bound of quantum discord, and demonstrated that
the MCDM-based discord could be a good approximation of quantum
discord.
\begin{acknowledgments}
  X.~Wang is supported by National Natural Science
Foundation of China (NSFC) with Grants No. 11025527,
No. 10874151, and No. 10935010. Z.-J. Xi is supported by the Superior
  Dissertation Foundation of Shaanxi Normal University (S2009YB03).
\end{acknowledgments}

\appendix

\section{zero-discord state in Fano-Bloch
  representation\label{sec:appd_zero-discord}}

Here, we derive the condition of zero-discord states in the Fano-Bloch
representation. An arbitrary two-qubit state can be written in the
Fano-Bloch representation as follows~\cite{Schlienz1995}:

\begin{equation}
  \rho=\frac{1}{4}\sum_{ij=0}^{3}\tau_{ij}\sigma_{i}^{A}\otimes\sigma_{j}^{B}
  \label{eq:Bloch_rep}
\end{equation}
where $\sigma_{0}^{A(B)}=\openone^{A(B)}$ is the $2\times2$ identity
operator, $\sigma_{1,2,3}^{A(B)}$ is Pauli matrices. Meanwhile, a von
Neumann measurement performed on $A$ is characterized by a set of von
Neumann operators

\begin{equation}
  \Pi_{1}^{A}=\frac{1}{2}\sum_{k=0}^{3}\alpha_{k}\sigma_{k}^{A},\qquad\Pi_{2}^{A}=\frac{1}{2}\sum_{k=0}^{3}\beta_{k}\sigma_{k}^{A}.
  \label{eq:measurent_basis}
\end{equation}
The coefficients $\alpha_{k}$ and $\beta_{k}$ are given by

\begin{equation}
  \alpha_{0}=\beta_{0}=1,\,\alpha_{k}=-\beta_{k}=n_{k}\mbox{ for }k=1,2,3,
  \label{eq:alpha_beta}
\end{equation}
where $n_{k}$ is the $k$-th component of the unit vector
$n=(n_{1},n_{2},n_{3})^{T}$ on Bloch sphere. The zero-discord states
are the ones which can be written in the form
\begin{equation}
  \rho=\sum_{k}p_{k}\Pi_{k}^{A}\otimes\rho_{k}^{B}
  \label{eq:zero-discord-form}
\end{equation}
with $p_{k}=\mathrm{Tr}(\Pi_{k}^{A}\rho)$ and
$\rho_{k}^{B}=\mathrm{Tr}_{A}(\Pi_{k}^{A}\rho)/p_{k}$, see
Ref.~\cite{Ollivier2001}. With Eq.~(\ref{eq:measurent_basis}), $p_{k}$
and $\rho_{k}^{B}$ can be obtained via

\begin{eqnarray}
  p_{1}\rho_{1}^{B} & = & \frac{1}{4}\sum_{i,j=0}^{3}(\alpha_{i}\tau_{ij})\sigma_{j}^{B},\nonumber \\
  p_{2}\rho_{2}^{B} & = & \frac{1}{4}\sum_{i,j=0}^{3}(\beta_{i}\tau_{ij})\sigma_{j}^{B}.
  \label{eq:pk_rhok}
\end{eqnarray}
Substituting Eqs.~(\ref{eq:measurent_basis}) and (\ref{eq:pk_rhok})
into the above form of the zero-discord
state~(\ref{eq:zero-discord-form}), we get
\begin{equation}
  \sum_{ij=0}^{3}\tau_{ij}\sigma_{i}^{A}\otimes\sigma_{j}^{B}=\frac{1}{2}\sum_{ijk=0}^{3}(\alpha_{i}\alpha_{k}+\beta_{i}\beta_{k})\tau_{kj}\sigma_{i}^{A}\otimes\sigma_{j}^{B}.
\end{equation}
Because $\{\sigma_{i}^{A}\otimes\sigma_{j}^{B}\}$ is a set of
orthogonal basis of operators on
$\mathcal{H}^{A}\otimes\mathcal{H}^{B}$, we get

\begin{equation}
  \tau=\frac{1}{2}(\alpha\alpha^{T}+\beta\beta^{T})\tau.
  \label{eq:prior_cond_zero_QD}
\end{equation}
where $\alpha=(\alpha_{0},\alpha_{1},\alpha_{2},\alpha_{3})^{T}$ and
$\beta=(\beta_{0},\beta_{1},\beta_{2},\beta_{3})^{T}$. The matrix
$\tau$ can be further decomposed into

\begin{equation}
  \tau=\left[\begin{array}{cc}
      1 & b^{T}\\
      a & R\end{array}\right],
  \label{eq:decomposition_tau}
\end{equation}
where $a$, $b$ are column vectors and $R$ are $3\times3$ matrix.
Combining Eqs.~(\ref{eq:alpha_beta}), (\ref{eq:prior_cond_zero_QD})
and (\ref{eq:decomposition_tau}), we obtain the condition of
zero-discord states with the Fano-Bloch representation as follows

\begin{eqnarray}
  nn^{T}a & = & a,\label{eq:zero-qd-cond1}\\
  nn^{T}R & = & R,\label{eq:zero-qd-cond2}
\end{eqnarray}
where $n$ is a unit column vector. The existence of such $n$
satisfying the above equations is a sufficient and necessary condition
for the zero-discord.

\section{a general expression of the quantum conditional
  entropy\label{sec:appd-general-expression-QCE}}

In the following, we give a general expression of the quantum
conditional entropy in the Fano-Bloch representation. For the states
(\ref{eq:Bloch_rep}) and the von Neumann measurement
(\ref{eq:measurent_basis}), the remaining state of the system $B$ with
measurement result $k=1,2$ of $A$ and the corresponding probability
are given by Eq.~(\ref{eq:pk_rhok}).

To obtain the eigenvalues of $\rho_{k}^{B}$, we first get the
eigenvalues of $p_{k}\rho_{k}^{B}$. From Eq.~(\ref{eq:pk_rhok}), we
have

\begin{align}
  \mathrm{eig}(p_{1}\rho_{1}^{B}) & =\frac{1}{4}(\sum_{i=0}^{3}\alpha_{i}\tau_{i0})\pm\frac{1}{4}\sqrt{\sum_{j=1}^{3}\left(\sum_{i=0}^{3}\alpha_{i}\tau_{ij}\right)^{2}},\\
  \mathrm{eig}(p_{2}\rho_{2}^{B}) &
  =\frac{1}{4}(\sum_{i=0}^{3}\beta_{i}\tau_{i0})\pm\frac{1}{4}\sqrt{\sum_{j=1}^{3}\left(\sum_{i=0}^{3}\beta_{i}\tau_{ij}\right)^{2}}.
\end{align}
With the decomposition form (\ref{eq:decomposition_tau}), we obtain
the following results

\begin{eqnarray}
  \mathrm{eig}(\rho_{1}) & = & \frac{1}{2}\left(1\pm\frac{\left|b+R^{T}n\right|}{1+a^{T}n}\right),\\
  \mathrm{eig}(\rho_{2}) & = & \frac{1}{2}\left(1\pm\frac{\left|b-R^{T}n\right|}{1-a^{T}n}\right),\\
  p_{1} & = & \frac{1}{2}(1+a^{T}n),\\
  p_{2} & = & \frac{1}{2}(1-a^{T}n),
\end{eqnarray}
where $|X|^{2}=\mathrm{Tr}(XX^{\dagger})$ is the Hilbert-Schmidt
norm. By introducing a new set of parameter as follows
\begin{eqnarray}
  f(n) & = & a^{T}n,\\
  g_{\pm}(n) & = & \left|b\pm R^{T}n\right|,
\end{eqnarray}
we get the quantum conditional entropy \cite{Ollivier2001} as follows

\begin{align}
  \sum_{k}p_{k}S(\rho_{k}^{B}) &
  =\frac{1+f}{2}h\left(\frac{g_{+}}{1+f}\right)+\frac{1-f}{2}h\left(\frac{g_{-}}{1-f}\right)
  \label{eq:general_QCE}
\end{align}
with
$h(x):=-\frac{1+x}{2}\log_{2}\frac{1+x}{2}-\frac{1-x}{2}\log_{2}\frac{1-x}{2}$.
Then the classical correlation $\mathcal{C}$ and the quantum discord
$\mathcal{D}$ can be obtained via
\begin{eqnarray}
  \mathcal{C} & = & S(\rho^{B})-\min_{|n|=1}\sum_{k}p_{k}S(\rho_{k}^{B}),\\
  \mathcal{D} & = & S(\rho^{A})-S(\rho)+\min_{|n|=1}\sum_{k}p_{k}S(\rho_{k}^{B}).
\end{eqnarray}

\bibliographystyle{apsrev} \bibliography{Discord}

\end{document}